\begin{document}

\title{A polynomial-time algorithm for planar multicuts with few source-sink pairs\thanks{This research work was supported by the French ANR project \emph{DOPAGE} (ANR-09-JCJC-0068)}}

\author{C\'edric Bentz}

\institute{LRI, Univ. Paris-Sud \& CNRS, 91405 Orsay Cedex, France\\\email{cedric.bentz@lri.fr}}

%\date{}

%\date{5th October 2005}
%last update: 03/06/2004

%\newtheorem{Fact}{Fact}
%\newtheorem{Lemma}{Lemma}
%\newtheorem{Th}{Theorem}
%\newtheorem*{ThYeh}{Theorem}
\spnewtheorem{Claim}{Claim}{\bfseries}{\itshape}

\maketitle

\thispagestyle{plain}

\begin{abstract}
Given an edge-weighted undirected graph and a list of $k$ source-sink pairs of vertices, the well-known \emph{minimum multicut problem} consists in selecting a minimum-weight set of edges whose removal leaves no path between every source and its corresponding sink. We give the first polynomial-time algorithm to solve this problem in planar graphs, when $k$ is fixed. Previously, this problem was known to remain \textbf{NP}-hard in general graphs with fixed $k$, and in trees with arbitrary $k$; the most noticeable tractable case known so far was in planar graphs with fixed $k$ and sources and sinks lying on the outer face.%\\\\
%\textbf{Categories and Subject Descriptors}: F.2.2 [Analysis of
%Algorithms and Problem Complexity]: Nonnumerical Algorithms; G.2.2
%[Discrete Mathematics]: Graph Theory \\\\\textbf{General Terms}:
%Algorithms, Theory \\\\\textbf{Additional Key Words}: Edge
%disjoint paths, integer multiflows, multicuts, multiterminal cuts,
%approximation algorithms
\end{abstract}

\section{Introduction}\label{sectIntro}

In this paper, we are interested in the study of the minimum multicut problem in
undirected graphs (no directed version is considered). This fundamental problem has been
extensively studied, and is well-known to be $\textbf{NP}$-hard even in very restricted classes of graphs.

Assume we are given a $n$-vertex $m$-edge undirected graph $G=(V,E)$, a \emph{weight function} $w : E \rightarrow \mathbb{Z}^+$ and a list $\mathcal{L}$ of pairs (source $s_i$, sink $s'_i$) of \emph{terminal} vertices. Each pair $(s_i, s'_i)$ defines a \emph{commodity}. The \emph{minimum multicut problem} (\textsc{MinMC}) consists in selecting a minimum weight set of edges whose removal separates $s_i$ from $s'_i$ for each $i$. The \emph{minimum multiterminal cut problem} (\textsc{MinMTC}) is a special case of \textsc{MinMC} in which, given a set of vertices $\mathcal{T} = \{t_1, \dots, t_{\arrowvert \mathcal{T} \arrowvert}\}$, the source-sink pairs are $(t_i, t_j)$ for $i \neq j$.

For $\arrowvert \mathcal{L} \arrowvert = 1$, the problem is the classical minimum cut problem. For $\arrowvert \mathcal{L} \arrowvert = 2$, the problem can be solved in polynomial time by solving two minimum cut problems \cite{refYan83}. However, Dahlhaus et al. showed that, for any fixed $\arrowvert \mathcal{L} \arrowvert \geq 3$, \textsc{MinMTC} (and hence \textsc{MinMC}) becomes $\textbf{NP}$-hard (and even $\textbf{APX}$-hard) in general graphs \cite{refDahlhaus94}. When $\arrowvert \mathcal{L} \arrowvert$ is not fixed, \textsc{MinMC} is $\textbf{APX}$-hard even in unweighted stars \cite{refGVY97} and $\textbf{NP}$-hard even in unweighted binary trees \cite{refCalinescu03}, while \textsc{MinMTC} is $\textbf{NP}$-hard in planar graphs \cite{refDahlhaus94}. We also mention that, in bounded tree-width graphs, \textsc{MinMTC} (resp. \textsc{MinMC}) is polynomial-time solvable when $\arrowvert \mathcal{L} \arrowvert$ is arbitrary \cite{refGuo08} (resp. when $\arrowvert \mathcal{L} \arrowvert$ is fixed \cite{refBentz08}). There have been recent results concerning FPT algorithms for \textsc{MinM(T)C}: however, the parameter considered in these papers is the size of the solution, and hence we shall not mention them here.

In their seminal paper, Dahlhaus et al. also showed that \textsc{MinMTC} can be solved in polynomial time in planar graphs if $\arrowvert \mathcal{L} \arrowvert$ is fixed, but they left as open three important questions: first, does \textsc{MinMTC} admit a polynomial-time approximation scheme (PTAS)? Second, is \textsc{MinMTC} FPT in planar graphs, if $\arrowvert \mathcal{L} \arrowvert$ is viewed as the parameter \cite{refDowney99}? Third, is \textsc{MinMC} also polynomial-time solvable in planar graphs if $\arrowvert \mathcal{L} \arrowvert$ is fixed? The first open question was recently addressed by Bateni et al. \cite{refBateni11}. The second one was even more recently addressed by Marx \cite{refMarx12}, and we answer the third question in this paper (while the case where all the sources and sinks lie on the outer face was already solved in \cite{refBentz09}).

It should be noticed that Hartvigsen \cite{refHartvigsen} and Yeh \cite{refYeh01} later provided other algorithms to solve \textsc{MinMTC} in planar graphs when $\arrowvert \mathcal{L} \arrowvert$ is fixed (none of them being FPT with respect to $\arrowvert \mathcal{L} \arrowvert$). Moreover, it was observed in \cite{refBentz06} and \cite{refBentzENDM09} that unfortunately the proof of Yeh's algorithm is not correct, and later it was proved in \cite{refCheung10} that the algorithm itself is not correct. The main mistake in the proof of this algorithm was to assume that, when replacing the boundary of any connected component by a minimum cut between some well-chosen vertices, we still obtain a single connected component. More recently, Marx and Klein gave an even faster algorithm to solve \textsc{MinMTC} in planar graphs when $\arrowvert \mathcal{L} \arrowvert$ is fixed \cite{refKM2012}, but Marx also managed to prove that, assuming the \emph{Exponential Time Hypothesis}, this problem is \emph{not} FPT with respect to $\arrowvert \mathcal{L} \arrowvert$ \cite{refMarx12}. This latter result immediately implies that \textsc{MinMC} in planar graphs is not FPT with respect to $\arrowvert \mathcal{L} \arrowvert$.

In this paper, we give an algorithm based, on the one hand, on a revised and generalized Yeh-like approach, and, on the other hand, on shortest homotopic paths methods, and show that this algorithm can be used to solve \textsc{MinMC} in polynomial time when the graph is planar and $\arrowvert \mathcal{L} \arrowvert$ is fixed. (Obviously, this also provides an alternative polynomial-time algorithm to solve MinMTC in planar graphs when $\arrowvert \mathcal{L} \arrowvert$ is fixed.) It is worth noticing that our major tool is a
new characterization of optimal solutions for this problem. Moreover, although homotopic routing methods have already been used to solve planar disjoint paths problems (see \cite{Schrijver90} and \cite{Schrijver91} for instance), to the best of our knowledge they have never been used to solve (multi)cut problems so far. (Our algorithm is not FPT, but the recent result of Marx \cite{refMarx12} implies that unfortunately this is essentially the best one can hope for.)

The paper is organized as follows. In Section \ref{sectStart}, we describe the starting point of our algorithm. Then, in Section \ref{sectPreliminaries}, we give some preliminary definitions and results, that will be useful in Section \ref{sectAlgo}. Finally, in Section \ref{sectAlgo}, we describe our algorithm, and prove its correctness.\\

\section{The starting point}\label{sectStart}

The first step of our algorithm is a simple idea presented in \cite{refBentz09}. Given a \textsc{MinMC} instance $I=(G=(V,E), w, \mathcal{L})$ and any of its optimal multicuts $C$, one can define the clustering of the terminals associated with the connected components of $G'=(V,E\setminus C)$ (we also say that this particular clustering \emph{induces} these connected components). The $i$th cluster of this clustering, denoted by $\mathcal{T}_i$, contains all the terminals lying in the $i$th connected component of $G'$. Once this clustering has been defined (although, so far, we need to know $C$ in order to do it), finding an optimal solution to $I$ is equivalent to removing a minimum-weight set of edges $C$ whose removal separates all the terminals in $\mathcal{T}_i$ from all the terminals in $\mathcal{T}_j$ for each $i \neq j$.

In this paper, we will refer to this problem as the \emph{minimum multi-cluster cut problem} (\textsc{MinMCC}). This problem has been defined as the \emph{Colored Multiterminal Cut problem} in \cite{refDahlhaus94}, where it is shown to be $\textbf{NP}$-hard in planar graphs, even with only four clusters (and it is claimed that this is also true for three clusters). Note that, in general graphs, \textsc{MinMCC} and \textsc{MinMTC} are equivalent, since from a \textsc{MinMCC} instance we can obtain an equivalent \textsc{MinMTC} instance by adding one new terminal vertex for each cluster, and linking all the terminals in this cluster (which will no longer be terminals in the \textsc{MinMTC} instance) to this new vertex by sufficiently heavy edges. However, this reduction does \emph{not} necessarily preserve planarity. Given a \textsc{MinMC} instance, we can build an equivalent \textsc{MinMCC} instance by enumerating all the possible clusterings of the terminals (such a clustering can contain up to $2 \vert \mathcal{L} \vert$ clusters): when $\vert \mathcal{L} \vert$ is fixed, this can be done in constant time, and so this yields the following lemma.

\begin{lemma}
When $\vert \mathcal{L} \vert$ is fixed, \textsc{MinMC} can be polynomially reduced to \textsc{MinMCC}, and this reduction preserves planarity.
\end{lemma}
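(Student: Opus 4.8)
The plan is to make explicit the enumeration hinted at just before the statement. Let $S$ be the set of terminals of the \textsc{MinMC} instance $I=(G,w,\mathcal{L})$, so $\vert S\vert\le 2\vert\mathcal{L}\vert$. First I would enumerate all partitions of $S$; their number equals the number of partitions of a set of size $\vert S\vert$, which is bounded by a constant depending only on $\vert\mathcal{L}\vert$ since $\vert\mathcal{L}\vert$ is fixed, so this enumeration takes constant time. A partition $\mathcal{P}=\{\mathcal{T}_1,\dots,\mathcal{T}_p\}$ is kept only if it is \emph{compatible} with $\mathcal{L}$, meaning that $s_i$ and $s'_i$ fall into different parts for every commodity $i$ (a partition that is not compatible cannot be the clustering induced by any multicut). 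Each kept partition $\mathcal{P}$, together with the \emph{unchanged} graph $G$ and weights $w$, defines a \textsc{MinMCC} instance $I_{\mathcal{P}}$; there are at most constantly many of these, say $I_{\mathcal{P}_1},\dots,I_{\mathcal{P}_q}$. The reduction outputs these instances and, from their optimal solutions, returns the cheapest one.

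To prove correctness I would establish $\min_{j}\,\mathrm{opt}(I_{\mathcal{P}_j})=\mathrm{opt}(I)$. For the inequality ``$\le$'': a feasible solution $C$ of $I_{\mathcal{P}_j}$ separates, by definition of \textsc{MinMCC}, every part of $\mathcal{P}_j$ from every other part; since $s_i$ and $s'_i$ lie in different parts by compatibility, $C$ separates $s_i$ from $s'_i$ for every $i$, hence $C$ is a feasible multicut of $I$ and $\mathrm{opt}(I)\le w(C)$. For ``$\ge$'': take an optimal multicut $C^\star$ of $I$ and let $\mathcal{P}^\star$ be the clustering of $S$ it induces, i.e.\ two terminals lie in the same part iff they lie in the same connected component of $(V,E\setminus C^\star)$ (as in Section \ref{sectStart}). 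Then $\mathcal{P}^\star$ is compatible with $\mathcal{L}$, so $\mathcal{P}^\star=\mathcal{P}_{j_0}$ for some $j_0$, and $C^\star$ separates terminals lying in distinct parts of $\mathcal{P}^\star$, hence $C^\star$ is feasible for $I_{\mathcal{P}_{j_0}}$; thus $\mathrm{opt}(I_{\mathcal{P}_{j_0}})\le w(C^\star)=\mathrm{opt}(I)$. Combining the two inequalities gives the equality, and an optimal solution of the minimizing \textsc{MinMCC} instance is an optimal multicut of $I$.

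Planarity is immediate: every $I_{\mathcal{P}_j}$ lives on the same graph $G$ as $I$, with no vertex or edge added or deleted (only the terminal-to-cluster assignment changes), so if $G$ is planar then so is each produced instance; this is precisely what the alternative reduction through \textsc{MinMTC} (attaching a heavy star to each cluster) fails to guarantee. I do not expect a genuine obstacle here; the only point requiring care is the ``$\ge$'' direction, where one should note that a feasible \textsc{MinMCC} solution may split $G$ into strictly \emph{more} connected components than there are clusters, so the clusters need not coincide with the components — but this is harmless, since \textsc{MinMCC} only demands that terminals from different clusters be separated, which is exactly the property used above.
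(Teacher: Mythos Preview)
Your proof is correct and follows exactly the approach sketched in the paragraph preceding the lemma in the paper: enumerate all (constantly many) partitions of the terminal set, discard those incompatible with $\mathcal{L}$, solve \textsc{MinMCC} on each, and return the cheapest solution; the paper provides no more detail than this, so your write-up is in fact more complete. One cosmetic slip: your labels ``$\le$'' and ``$\ge$'' are swapped relative to the equality $\min_j \mathrm{opt}(I_{\mathcal{P}_j})=\mathrm{opt}(I)$ you set out to prove --- the first argument actually shows $\mathrm{opt}(I)\le\min_j \mathrm{opt}(I_{\mathcal{P}_j})$ and the second the reverse --- though the arguments themselves are sound.
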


Since we enumerate all the possible clusterings in order to guess the right one, we can also assume without loss of generality that the one we chose has the property that no clustering associated with an optimal solution induces more connected components than this one does. In other words, in the (planar) \textsc{MinMCC} instance we obtain, every cluster induces exactly one connected component in any optimal solution. In the remainder of the paper, we design an efficient algorithm to solve \textsc{MinMCC} in planar graphs when the sum of the sizes of the clusters is fixed (otherwise, from \cite{refDahlhaus94}, the problem is $\textbf{NP}$-hard); from the above enumeration argument, we can assume that every cluster induces only one connected component (note that this problem generalizes planar \textsc{MinMTC} with a fixed number of terminals). To do this, we will make use of some notions and results related to planarity, planar curves and planar duality, which we introduce in the next section.

\section{Preliminary definitions and results}\label{sectPreliminaries}

Throughout the paper, each time we consider a \textsc{MinMCC} instance in a planar graph $G$, we assume without loss of generality that $G$ is simple, loopless, connected (otherwise, we can consider each connected component independently), and even $2$-vertex-connected (from \cite{refBentz09}), but also that some planar embedding of $G$ is given. Recall that to any planar graph $G$ (embedded in the plane) we can associate a dual (planar) graph $G^*$: each face (including the outer face) of the initial (or primal) graph $G$ is associated with one vertex in the dual graph $G^*$, and there is an edge between two vertices in the dual graph iff the associated faces are adjacent (i.e., share an edge) in the primal graph. (If an edge belongs to only one face, then it corresponds to a loop in the dual graph.) As a consequence, there is a one-to-one correspondence between primal faces (resp. vertices) and dual vertices (resp. faces).

\begin{figure}[h]
    \begin{center}
        \includegraphics[width=11cm]{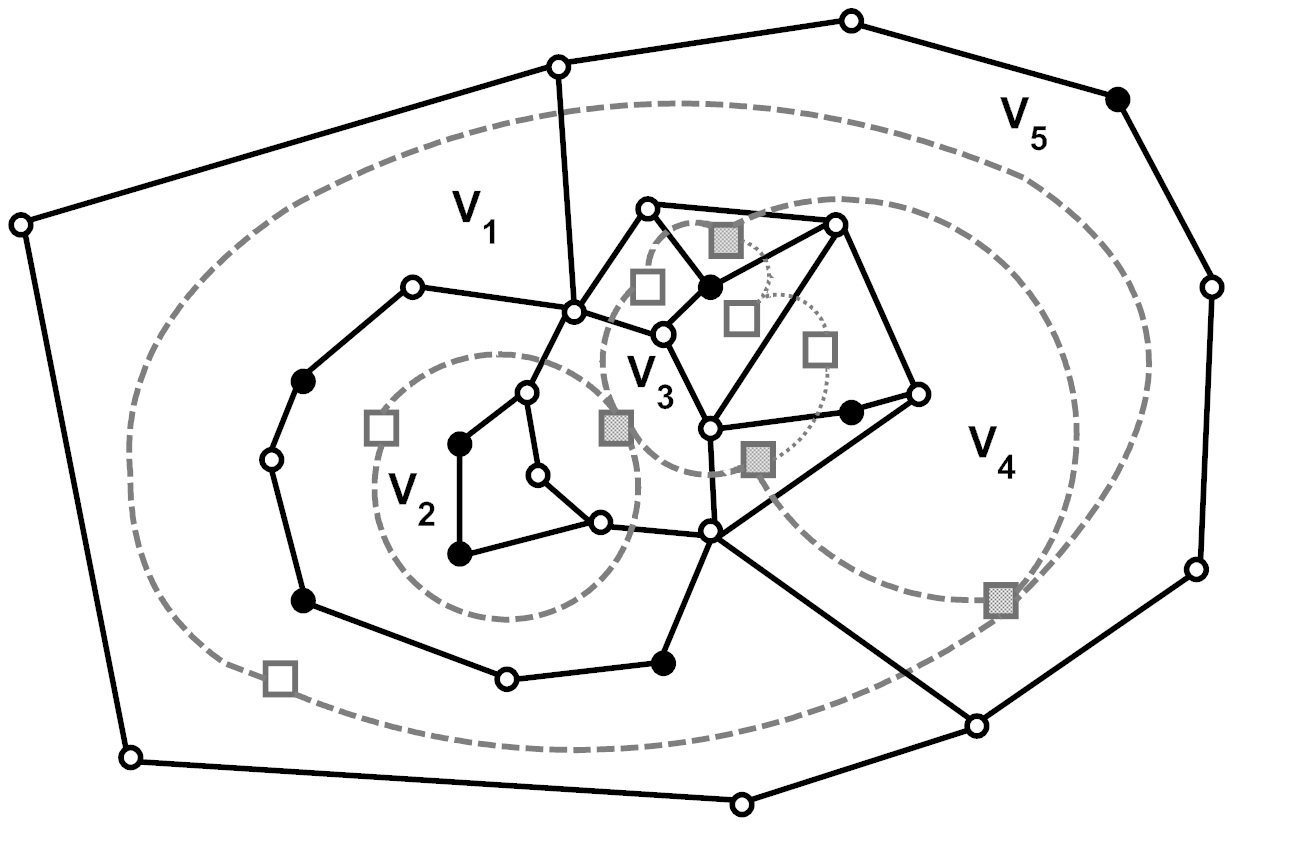}
    \end{center}
    \caption{A multi-cluster cut in a planar graph with five clusters. The edges of the initial (primal) graph are in plain lines, the non-terminal vertices are the white round vertices, the terminals are the black round vertices, the dual vertices are the square vertices, and the dual edges associated with the multi-cluster cut $\mathcal{C}$ are in dashed and dotted lines. (The edges of $\mathcal{C}_1$ are in dashed lines, and the four grey square vertices are the joint-vertices of $\mathcal{C}$.)}
    \label{figPlanarMinMCC}
\end{figure}

Given a \textsc{MinMCC} instance $I=(G=(V,E), w, \mathcal{T}=\{\mathcal{T}_1, \dots, \mathcal{T}_p\})$ and an optimal multi-cluster cut $C$ for $I$, we denote by $C^*$ the edge set dual to $C$, and, for each $i$, by $V_i$ the vertices of the connected component of $G'=(V, E \setminus C)$ containing the terminals in $\mathcal{T}_i$, and by $C_i$ the set of edges such that $C_i \subseteq C$ and $C_i$ has exactly one endpoint in $V_i$. We define a \emph{joint-vertex} as a dual vertex (a vertex of the dual graph $G^*$ of $G$) of degree at least 3 in $C^*$. Note that each $C_i$ corresponds to a set of (not necessarily simple) cycles in $G^*$. Let us assume for now that each $C_i$ corresponds to only one cycle.

If the edges in the embedding of the dual graph are viewed as curves in the plane (the dual vertices being intersections between curves), then the dual image of each $C_i$ will be a closed curve $\mathcal{C}_i$ (the union of all the $\mathcal{C}_i$'s, i.e., the geometric representation of $C^*$, will be denoted by $\mathcal{C}$); if this closed curve is simple (this may not be the case, see below), then, by the Jordan curve theorem, the faces of $G^*$ associated with all the terminals in $\mathcal{T}_i$ are \emph{inside} this curve, and the faces of $G^*$ associated with all the terminals in $\bigcup_{j\neq i} \mathcal{T}_j$ are \emph{outside} this curve (which simply means that the edges associated with $\mathcal{C}_i$ isolate the terminals in $\mathcal{T}_i$ from all the other terminals). When $\mathcal{C}_i \subset \mathbb{R}^2$ is not simple (as this is the case for $\mathcal{C}_1$ in Figure \ref{figPlanarMinMCC}), i.e., when $\mathcal{C}_i$ self-intersects in one or more points of the plane, the situation is a bit more complex: in this case, by a simple corollary of the Jordan curve theorem, $\mathbb{R}^2 \setminus \mathcal{C}_i$ contains more than two connected regions (a connected region of $\mathbb{R}^2 \setminus \mathcal{C}_i$ being a region of $\mathbb{R}^2$ such that any two points of this region can be linked by a curve without crossing $\mathcal{C}_i$), and one of these connected regions is unbounded (it is called the \emph{unbounded region}), while all the other ones are bounded. The only bounded region of $\mathbb{R}^2 \setminus \mathcal{C}_i$ (and all the faces it contains) that is adjacent to the unbounded region is called the \emph{inside} of $\mathcal{C}_i$ (it is unique since
$V_i$ is connected), and every other bounded region of $\mathbb{R}^2 \setminus \mathcal{C}_i$ is called an \emph{inner region} of $\mathcal{C}_i$ (although it does belong to the outside of $\mathcal{C}_i$, and not to its inside).

Notice that, if some $C^*_i$ contains more than one cycles (either simple or not), then either this means that there is one cycle $\bar{C}^*_i$ contained in $C^*_i$, corresponding to a closed curve $\bar{\mathcal{C}}_i$ in $G^*$, such that any other cycle contained in $C^*_i$ lies inside $\bar{\mathcal{C}}_i$, or this means that $V_i$ is the only component in contact with the infinite face. (In the first case, note that there is at least one other $C^*_j$ for some $j \neq i$ that lies inside the closed curve corresponding to each cycle in $C^*_i \setminus \bar{C}^*_i$.) So, we have:

\begin{lemma}\label{lemHomotopy1}
For each $i$, if $\mathcal{C}_i$ is a closed curve, then the faces associated with $\mathcal{T}_i$ are inside $\mathcal{C}_i$, while the faces associated with $\mathcal{T}_j$ are outside $\mathcal{C}_i$, for each $j \neq i$.
\end{lemma}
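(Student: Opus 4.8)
The plan is to unpack the discussion immediately preceding the statement into a clean case analysis, keyed on whether $\mathcal{C}_i$ is a simple closed curve or a self-intersecting one, and in both cases to reduce to the Jordan curve theorem (or its stated corollary). Throughout I would fix $i$ and work in the plane with the embedded dual graph $G^*$, identifying the closed curve $\mathcal{C}_i$ with the geometric image of the cycle(s) $C^*_i$ dual to $C_i$; recall that $C_i$ is the set of edges of $C$ with exactly one endpoint in $V_i$, and that we are in the case where $C^*_i$ is a single cycle, so $\mathcal{C}_i$ is indeed one closed curve.

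First I would handle the simple case. If $\mathcal{C}_i$ is a simple closed curve, the Jordan curve theorem splits $\mathbb{R}^2 \setminus \mathcal{C}_i$ into exactly two regions, one bounded (the \emph{inside}) and one unbounded. The key observation is that two dual faces lie in the same region of $\mathbb{R}^2 \setminus \mathcal{C}_i$ precisely when the corresponding primal vertices can be joined by a primal path using no edge of $C_i$; this is just the standard primal-dual crossing correspondence (a primal edge is in $C_i$ iff its dual edge is on $\mathcal{C}_i$). Since $V_i$ is connected in $G' = (V, E \setminus C)$ and $C_i \subseteq C$, all primal vertices of $V_i$ — in particular all terminals of $\mathcal{T}_i$ — lie in a common region; and since removing $C_i$ already disconnects $V_i$ from the rest of $G$ (every edge leaving $V_i$ that is in $C$ is in $C_i$, and $C$ is a multi-cluster cut so in an optimal solution $V_i$ contains no terminal of $\mathcal{T}_j$, $j \neq i$, by the induced-component assumption), every terminal of $\bigcup_{j \neq i}\mathcal{T}_j$ lies in the complementary region. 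It remains to check that the region containing $\mathcal{T}_i$ is the bounded one: this follows because the unbounded region necessarily contains the face of $G^*$ dual to the outer face of $G$, which is reached by the $\mathcal{T}_j$'s (or, more robustly, because $V_i$ is a proper subset of a connected primal graph, so its complement is nonempty and must lie in the unbounded region unless $V_i$ itself does — and if $V_i$ were the component touching the infinite face we are, by the remark preceding the lemma about $C^*_i$ having several cycles, in the situation already excluded here).

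Next, the non-simple case. When $\mathcal{C}_i$ self-intersects, the stated corollary of the Jordan curve theorem gives several regions of $\mathbb{R}^2 \setminus \mathcal{C}_i$: one unbounded, one distinguished bounded region adjacent to it (the \emph{inside}, well-defined and unique because $V_i$ is connected), and possibly several \emph{inner regions}. I would argue exactly as above that all of $V_i$, hence all of $\mathcal{T}_i$, lies in a single region, and that this region is the \emph{inside}: it must be bounded (same argument as before, $V_i$ proper), and it must be the one adjacent to the unbounded region because the boundary of $V_i$ in the primal, traversed as a dual curve, is exactly the part of $\mathcal{C}_i$ separating the inside from the outside. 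Conversely each terminal of $\mathcal{T}_j$, $j \neq i$, lies in the outside of $\mathcal{C}_i$ — either in the unbounded region or in an inner region — which is all that ``outside'' means here; I would emphasize that an inner region, though bounded, is genuinely part of the outside since by definition it is separated from the inside by arcs of $\mathcal{C}_i$, i.e.\ the corresponding primal vertices are separated from $V_i$ by $C_i$.

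The main obstacle is the non-simple case: making precise, without hand-waving, that the \emph{inside} as defined (the unique bounded region adjacent to the unbounded one) really is where $\mathcal{T}_i$ lands, and that no terminal of $\mathcal{T}_i$ can stray into an inner region. The clean way to do this is to avoid reasoning about the curve's topology directly and instead push everything back to the primal graph: $V_i$ is connected, so its dual faces form one region; that region is bounded because $G$ is connected and $V_i \subsetneq V$; and it is adjacent to the unbounded region because an edge of $C_i$ on the outer boundary of $V_i$ witnesses adjacency across $\mathcal{C}_i$ between a face inside $V_i$ and a face not enclosed by any arc of $\mathcal{C}_i$. Once this primal-side bookkeeping is set up, both directions of the lemma follow, and the final one-line conclusion is simply the conjunction of the two cases.
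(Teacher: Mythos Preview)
Your proposal is correct and follows essentially the same approach as the paper, which does not give a separate proof but treats the lemma as a direct summary of the discussion immediately preceding it (the Jordan-curve argument in the simple case, and its corollary together with the definition of \emph{inside} and \emph{inner regions} in the non-simple case). You have simply made that discussion explicit and filled in the primal-dual bookkeeping that the paper leaves implicit.
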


We also need to define \emph{homotopic curves}. Roughly speaking, given a set $\mathcal{O}$ of $\mu$ obstacles (typically, faces) $\mathcal{O}_1, \dots, \mathcal{O}_{\mu}$ in the plane, two simple curves $\mathcal{C}_1, \mathcal{C}_2$ in $\mathbb{R}^2 \setminus \mathcal{O}$ sharing the same endpoints (or two closed curves) are said to be \emph{homotopic with respect to $\mathcal{O}$} if $\mathcal{C}_1$ can be continuously deformed into $\mathcal{C}_2$ in $\mathbb{R}^2 \setminus \mathcal{O}$. We can also say that $\mathcal{C}_1$ is homotopic to $\mathcal{C}_2$ with respect to $\mathcal{O}$, or alternatively that $\mathcal{C}_1$ and $\mathcal{C}_2$ belong to the same homotopy class. In the present setting, the curves we will consider are the ones that are associated with (i.e., that are the dual images of) the $C_i$'s (or parts of them); the set of obstacles $\mathcal{O}$ we will consider is the set of faces associated with the terminals. Then, the following lemma is easy to see:

\begin{lemma}\label{lemHomotopy2}
Two closed curves having the same faces of $\mathcal{O}$ in their insides and the same faces of $\mathcal{O}$ in their outsides are homotopic with respect to $\mathcal{O}$.
\end{lemma}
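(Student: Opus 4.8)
The plan is to argue directly from the definition of homotopy, using the combinatorial/topological structure of closed curves in the punctured plane $\mathbb{R}^2 \setminus \mathcal{O}$. Recall that the homotopy class of a closed curve in $\mathbb{R}^2$ minus $\mu$ points is classified by the fundamental group of that space, which is the free group on $\mu$ generators, one loop around each puncture (equivalently, each obstacle face of $\mathcal{O}$). Given two closed curves $\mathcal{C}_1, \mathcal{C}_2$ with the same faces of $\mathcal{O}$ inside and the same faces outside, I want to show they represent the same element of this free group, hence are homotopic with respect to $\mathcal{O}$.

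First I would reduce to the simple case. If $\mathcal{C}_1$ is not simple, note that in our setting (Lemma \ref{lemHomotopy1} and the surrounding discussion) it still bounds a well-defined ``inside'' consisting of the faces of $\mathcal{T}_i$, together with some inner regions that belong to the outside; the curve is freely homotopic in $\mathbb{R}^2 \setminus \mathcal{O}$ to a simple closed curve enclosing exactly the faces of the inside (one can push the inner regions out across the self-intersection points, since by Lemma \ref{lemHomotopy1} they contain no obstacle faces, only faces that are meant to be outside). So it suffices to prove the statement for two simple closed curves. For a simple closed curve $\mathcal{C}$ in $\mathbb{R}^2 \setminus \mathcal{O}$, the Jordan curve theorem gives a well-defined bounded component (the inside) and unbounded component (the outside), and each obstacle face lies entirely in one of the two. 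The key elementary fact I would invoke is: a simple closed curve in the plane is freely homotopic, within the complement of a point $x$, to a small loop around $x$ if $x$ is inside, and to a constant (point) curve if $x$ is outside; more generally, its class in $\pi_1(\mathbb{R}^2 \setminus \mathcal{O})$ is (up to conjugation and the choice of how the generators are ordered along the curve) the product of the generators corresponding exactly to the obstacle faces lying inside it.

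Then the argument concludes as follows. Let $S \subseteq \{\mathcal{O}_1, \dots, \mathcal{O}_\mu\}$ be the common set of obstacle faces inside both $\mathcal{C}_1$ and $\mathcal{C}_2$ (so the common set outside is the complement). I would build an explicit continuous deformation of $\mathcal{C}_1$ into $\mathcal{C}_2$ avoiding $\mathcal{O}$: since the obstacles are a finite set of disjoint closed faces, contract each obstacle in $S$ to a single interior point and each obstacle not in $S$ likewise, without loss of generality treating $\mathcal{O}$ as a finite set of $\mu$ points; both $\mathcal{C}_1$ and $\mathcal{C}_2$ are then simple closed curves in $\mathbb{R}^2 \setminus \mathcal{O}$ with the same subset $S$ of punctures in their interiors. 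One may normalize $\mathcal{C}_1$ to the boundary of a small regular neighbourhood of the convex hull of $S$ (a topological disk containing exactly the punctures in $S$ and no others), using that any two disks with the same finite set of punctures inside are ambient-isotopic in the plane rel the punctures; the same normalization applies to $\mathcal{C}_2$, and composing the two isotopies yields the desired homotopy. The main obstacle, and the step deserving the most care, is the reduction from non-simple to simple curves and the claim that inner regions contain no obstacle faces: this is exactly where one must use Lemma \ref{lemHomotopy1} (faces of $\mathcal{T}_j$, $j \neq i$, lie outside $\mathcal{C}_i$, and the inner regions, though part of the outside, contain only such faces or non-terminal faces) to justify that the inner regions can be absorbed without ever crossing an element of $\mathcal{O}$. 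Everything else is a standard consequence of the Jordan curve theorem and the classification of curves in the punctured plane.
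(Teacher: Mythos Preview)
The paper offers no proof of this lemma beyond the phrase ``easy to see,'' so there is no argument of the paper's to compare against. Your treatment of the simple case is sound: two simple closed curves in a punctured sphere that separate the punctures in the same way are isotopic, hence freely homotopic, and this is classical.

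The gap is in your reduction from non-simple to simple. You propose to ``push the inner regions out across the self-intersection points'' and claim this can be done without crossing $\mathcal{O}$, appealing to Lemma~\ref{lemHomotopy1}. But Lemma~\ref{lemHomotopy1} only tells you that the faces of $\mathcal{T}_j$ (for $j\neq i$) lie \emph{outside} $\mathcal{C}_i$, and the inner regions are by definition part of the outside. Your own parenthetical concedes that inner regions ``contain only such faces or non-terminal faces''---that is, they may contain faces of $\mathcal{T}_j$, and those faces \emph{are} elements of $\mathcal{O}$, since $\mathcal{O}$ consists of \emph{all} terminal faces, not just $\mathcal{F}_i$. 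Figure~\ref{figPlanarMinMCC} illustrates exactly this: the inner region of $\mathcal{C}_1$ encloses terminals from other clusters. Absorbing such an inner region sweeps the curve across those obstacles, so the deformation you describe is not a homotopy in $\mathbb{R}^2\setminus\mathcal{O}$.

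In fact, read literally for arbitrary closed curves, the statement is false. With a single obstacle $a$, realize the class $\alpha^{2}$ as two nested concentric circles around $a$ joined by two short transversal arcs; this curve has a unique bounded region adjacent to the unbounded one (an annular region containing no obstacle) and an inner region containing $a$. Hence its ``inside'' is empty and its ``outside'' contains $a$, exactly the same data as a small null-homotopic circle disjoint from $a$; yet $\alpha^{2}$ is not null-homotopic in $\mathbb{R}^2\setminus\{a\}$. What makes the lemma usable in the paper is the extra structure carried by the curves actually in play: each $\mathcal{C}_i$ is the dual boundary of a single connected primal component $V_i$, so its winding number about every primal vertex is $0$ or $1$ and no double winding can occur. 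A correct argument has to invoke this (or an equivalent) constraint explicitly; the absorption step as you have written it does not go through.
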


Finally, let us notice that the number of vertices in $G^*$ is bounded by $2 \vert V \vert - 4$, since it is equal to the number of faces $f_G$ of $G$. Indeed, $G$ is a simple, loopless and connected planar graph, and hence each of its faces contains at least three vertices and edges: this implies that $2 \vert E \vert \geq 3 f_G$, which, combined with Euler's formula $\vert V \vert+f_G-|E| = 2$, yields $f_G \leq 2\vert V \vert - 4$. However, we still have to bound the number of joint-vertices in $C^*$. To this end, the following lemma will be useful in the next section:

\begin{lemma}\label{lemNbDualJoints}
The number of joint-vertices in $C^*$ is at most $2p-4$.
\end{lemma}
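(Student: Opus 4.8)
The plan is to think of $C^*$ (the dual image of the multi-cluster cut) as a plane multigraph $H$ whose vertices are the dual vertices incident to edges of $C^*$, and to bound the number of vertices of $H$ having degree at least $3$ — these are exactly the joint-vertices — by a discharging/Euler-type argument. The key structural fact I would use is the one established just before the statement: each $C_i$ is a union of cycles, so in $H$ every vertex has \emph{even} degree (each $C^*_i$ contributes an even degree at every dual vertex, being an edge-disjoint union of closed curves), and moreover the $p$ curves $\mathcal{C}_1,\dots,\mathcal{C}_p$ together ``enclose'' the $p$ clusters. Since a vertex of even degree that is a joint-vertex has degree at least $4$, I get $\sum_{v} \deg_H(v) \ge 4 J + 2(|V(H)| - J)$, where $J$ is the number of joint-vertices, i.e. $|E(H)| \ge J + |V(H)|$.

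Next I would bound the number of faces (equivalently, connected regions of $\mathbb{R}^2\setminus\mathcal{C}$) from below. The curves $\mathcal{C}_i$ are nested/disjoint in the controlled way described in the paragraph preceding Lemma~\ref{lemHomotopy1}: for each $i$ the inside of $\mathcal{C}_i$ contains $\mathcal{T}_i$ and excludes all other terminals, so the arrangement $\mathcal{C}$ determines at least $p$ distinct bounded regions (one ``innermost'' region per cluster) plus the unbounded region, giving $f_H \ge p+1$ faces. Plugging into Euler's formula for the plane multigraph $H$ (taking $H$ connected after contracting the trivial case; in general one argues componentwise, and the componentwise bounds add up favourably), $|V(H)| - |E(H)| + f_H = 2$, combined with $|E(H)| \ge J + |V(H)|$ and $f_H \ge p+1$, yields $2 = |V(H)| - |E(H)| + f_H \le -J + (p+1)$, hence $J \le p-1$. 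This is actually stronger than claimed, so the more careful (and honest) version of the argument must account for the fact that a joint-vertex can have degree $6, 8, \dots$, not just $4$, which only helps, and for the fact that $H$ need not be connected and the $\mathcal{C}_i$'s may be non-simple, creating extra inner regions; the bound $2p-4$ leaves comfortable slack to absorb these.

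The step I expect to be the main obstacle is making the face count rigorous in the presence of \emph{non-simple} curves and nested cycles within a single $C^*_i$ — i.e. correctly interpreting ``connected regions of $\mathbb{R}^2\setminus\mathcal{C}$'' as the faces of the plane multigraph $H$, and checking that self-intersections and inner regions (as defined for the non-simple case before Lemma~\ref{lemHomotopy1}) do not decrease $f_H$ below $p+1$. Once the inequality $|E(H)| \ge J + |V(H)|$ from the even-degree observation and the inequality $f_H \ge p-1$ (a deliberately weakened, easy-to-justify form of the face count, summing over components of $H$) are both in hand, Euler's formula closes the argument: $J \le |E(H)| - |V(H)| \le f_H - 2 + (\text{number of components} - 1)\cdot(\dots)$, and bookkeeping the small correction terms gives $J \le 2p-4$. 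I would present the even-degree/minimum-degree-$4$ observation first, then the region count, then the one-line Euler computation.
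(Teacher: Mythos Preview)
Your overall strategy---Euler's formula on the plane multigraph $H$ induced by $C^*$, combined with a lower bound on vertex degrees and a face count---is exactly the paper's approach. But two of your inputs to the Euler computation are wrong, and that is why you land on the too-strong bound $J\le p-1$ and then have to hedge.

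First, the even-degree claim is false. It is true that each $C^*_i$, being the dual of the edge-boundary of $V_i$, is a union of cycles and hence has even degree at every dual vertex. But the $C^*_i$ are \emph{not} pairwise edge-disjoint: every dual edge in $C^*$ lies in exactly two of them (a primal edge between $V_i$ and $V_j$ belongs to both $C_i$ and $C_j$). Thus $\sum_i \deg_{C^*_i}(v)=2\deg_{C^*}(v)$, and the left side being a sum of even numbers says nothing about the parity of $\deg_{C^*}(v)$. Concretely, take a triangular primal face with one vertex in each of $V_1,V_2,V_3$: all three edges are cut, so the corresponding dual vertex has degree $3$ in $C^*$. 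Joint-vertices of degree exactly $3$ do occur (the paper's figure shows some), so the correct lower bound on the degree of a joint-vertex is $3$, not $4$.

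Second, the face count is exactly $p$, not at least $p+1$. The faces of $H$ are the connected regions of $\mathbb{R}^2\setminus\mathcal{C}$; each such region contains primal vertices from a single $V_i$, and conversely all of $V_i$ lies in one region because $V_i$ is assumed connected in $G'=(V,E\setminus C)$. So regions and clusters are in bijection, and one of the $p$ regions is already the unbounded one---there is no extra outer face to add.

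If you rerun your computation with these corrected values (minimum degree $3$ at joint-vertices, degree $2$ elsewhere, and $f_H=p$), you get $2|E(H)|\ge 3J+2(|V(H)|-J)$, i.e.\ $J\le 2(|E(H)|-|V(H)|)$, and Euler's formula $|V(H)|-|E(H)|+p=1+\kappa$ gives $|E(H)|-|V(H)|=p-1-\kappa\le p-2$, hence $J\le 2p-4$. This is precisely the paper's proof; the paper just contracts the degree-$2$ vertices first, which is cosmetic since that preserves $|E|-|V|$. Your instinct that the ``stronger'' bound signalled a mistake was correct, but the culprit is the parity claim, not non-simplicity or disconnectedness.
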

\begin{proof}
This can be shown by a simple application of Euler's formula. Consider the subgraph of $G^*$ induced by $C^*$. In this subgraph, there is no vertex of degree 1, and we contract any vertex of degree 2 in this subgraph (this does not modify the number of vertices of degree at least 3) in order to obtain the graph $G^*_C$. The number of faces in $G^*_C$ is $p$, since each cluster in $\{\mathcal{T}_1, \dots, \mathcal{T}_p\}$ induces exactly one connected component in $G$. We remove loops (and associated faces) as well as multiple edges (and associated faces) from $G^*_C$: each time we remove such an edge, we remove one face. If we denote by $m_C$ and $f_C$ the number of edges and faces in $G^*_C$, and by $n_C,m'_C,f'_C,\kappa_C$ the number of vertices, edges, faces, and connected components in this updated (simple) graph, respectively, then by Euler's formula we have $n_C+f'_C-m'_C=1+\kappa_C$, i.e., $n_C+f_C-m_C=1+\kappa_C$. (Note that $n_C$ is the number of joint-vertices we have to consider.) Any vertex in $G^*_C$ has degree at least 3, and hence $2 m_C \geq 3n_C$. Since $\kappa_C \geq 1$, we have $n_C+f_C-m_C\geq 2$, i.e., $n_C \geq m_C-f_C+2 \geq 3n_C/2-f_C+2$, and this yields $n_C/2 \leq f_C-2$, i.e., $n_C \leq 2f_C-4 = 2p-4$.\qed
\end{proof}

A similar result was presented in \cite[Theorem 5]{refYeh01}, using the notion of \emph{component graph} (in which there is a vertex for each component $V_i$ and a single edge between any two vertices if the corresponding components share at least one edge); however, a joint-vertex may actually \emph{not} induce a face in the component graph (see the joint-vertex belonging to $C^*_2$ in Figure \ref{figPlanarMinMCC} for instance), since this graph is simple by definition, and hence this proof was incomplete.

\section{Description and proof of the algorithm}\label{sectAlgo}

\subsection{A structural description of optimal solutions}\label{sectAlgo:Struct}

Dahlhaus et al. \cite{refDahlhaus94}, and later Hartvigsen \cite{refHartvigsen}, gave structural descriptions of optimal planar multiterminal cuts (one is based on the notion of \emph{topology} and on minimum spanning trees computation, and the other is based on links between optimal planar multiterminal cuts and Gomory-Hu cut collections). However, it is not clear whether these structural results could be extended to optimal planar multi-cluster cuts; in fact, it seems that they cannot. Here, we give a new and somewhat simpler structural description of optimal planar multiterminal cuts (although it may imply enumerating more elements than in the approaches described by Dahlhaus et al. and Hartvigsen), that is also valid for optimal planar multi-cluster cuts.

We use the definitions and notations from the previous section. Let $\mathcal{F}$ (resp. $\mathcal{F}_i$) be the faces of $G^*$ associated with the terminals in $\mathcal{T}$ (resp. in $\mathcal{T}_i$), and let $C$ be any multi-cluster cut that partitions the plane into $p$ connected regions (each one containing one cluster), such as a minimum multi-cluster cut (for instance). Let us now consider $C_i$ for some $i$, and assume that the dual image $\mathcal{C}_i$ of $C_i$ consists of only one closed curve. This curve goes through a certain number of joint-vertices: let us call them $\omega_1, \dots, \omega_{q_i}$, in clockwise order (with $\omega_1 = \omega_{q_i}$). Recall that, by definition, the curve $\mathcal{C}_i$ intersects other $\mathcal{C}_j$'s \emph{only} at joint-vertices. Assume that $q_i \geq 2$. Then, we have:

\begin{lemma}\label{lemStruct}
Let $V_i$ be a connected component of $G'=(V, E \setminus C)$, let $\mathcal{C}_i$ be the associated curve in $G^*$, and let $\omega_1, \dots, \omega_{q_i}$ be the joint-vertices $\mathcal{C}_i$ goes through. Then, $\mathcal{C}'=(\mathcal{C}\setminus \mathcal{C}_i) \cup \mathcal{C}'_i$ is also a valid multi-cluster cut for $I$, where $\mathcal{C}'_i$ is any cycle in $G^*$ going through $\omega_1, \dots, \omega_{q_i}$, and such that the faces associated with $\mathcal{T}_i$ are inside $\mathcal{C}'$, while the faces associated with $\mathcal{T}_j$ are outside $\mathcal{C}'$ for each $j \neq i$.
\end{lemma}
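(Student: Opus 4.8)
The plan is to verify directly that, in $G''=(V,E\setminus C')$, every pair of terminals lying in distinct clusters is disconnected. By planar duality this amounts to exhibiting, for each such pair, a closed curve contained in $\mathcal{C}'$ that separates the two corresponding faces of $G^*$: a primal path joining the two terminals would have to cross such a curve, hence use an edge of $C'$, which has been removed. So it suffices to produce, for every $a\neq b$ and every $t\in\mathcal{T}_a$, $t'\in\mathcal{T}_b$, a closed curve in $\mathcal{C}'$ with the face of $t$ and the face of $t'$ in different connected regions of its complement.

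When one of the two clusters is $\mathcal{T}_i$, say the pair is $t\in\mathcal{T}_i$, $t'\in\mathcal{T}_j$ with $j\neq i$, the curve $\mathcal{C}'_i$ itself does the job: it lies in $\mathcal{C}'$ by construction, and by hypothesis the faces of $\mathcal{T}_i$ lie in its inside while the faces of $\mathcal{T}_j$ lie in its outside. Here \emph{inside} and \emph{outside} are taken in the precise sense of Section~\ref{sectPreliminaries}, so that this argument remains valid even when $\mathcal{C}'_i$ is not simple: an inner region still belongs to the outside, and is still separated from the inside by arcs of $\mathcal{C}'_i$.

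When $a,b\neq i$, the natural candidate is $\mathcal{C}_a$, which by the validity of $C$ and Lemma~\ref{lemHomotopy1} has the faces of $\mathcal{T}_a$ in its inside and those of $\mathcal{T}_b$ in its outside. The obstacle is that $\mathcal{C}_a$ need not lie in $\mathcal{C}'$: the dual edges it shares with $\mathcal{C}_i$ (the duals of the primal edges between $V_a$ and $V_i$) are removed when passing from $\mathcal{C}$ to $\mathcal{C}\setminus\mathcal{C}_i$, yet need not belong to $\mathcal{C}'_i$. I would therefore repair $\mathcal{C}_a$ into a closed curve $\hat{\mathcal{C}}_a$ by replacing each maximal sub-arc that $\mathcal{C}_a$ shares with $\mathcal{C}_i$ — such a sub-arc necessarily runs between two of the joint-vertices $\omega_1,\dots,\omega_{q_i}$ and has, in its interior, only joint-vertices from this list — by the portion of $\mathcal{C}'_i$ between the same two endpoints, which is well defined because $\mathcal{C}'_i$ threads $\omega_1,\dots,\omega_{q_i}$ in the same cyclic order as $\mathcal{C}_i$ (the hypothesis $q_i\geq 2$ guaranteeing a genuine sequence of anchor points). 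Then $\hat{\mathcal{C}}_a$ is again a closed curve, and all its dual edges lie in $\mathcal{C}'$: its unreplaced part is contained in $\mathcal{C}_a\setminus\mathcal{C}_i\subseteq\mathcal{C}\setminus\mathcal{C}_i$, and the inserted arcs are contained in $\mathcal{C}'_i$.

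The heart of the proof is then to show that $\hat{\mathcal{C}}_a$ still separates the faces of $\mathcal{T}_a$ from those of $\mathcal{T}_b$. Since $\mathcal{C}'_i$ and $\mathcal{C}_i$ have exactly the same faces of $\mathcal{F}$ in their insides and the same faces in their outsides, they are homotopic with respect to $\mathcal{F}$ by Lemma~\ref{lemHomotopy2}; combining this with the fact that both thread $\omega_1,\dots,\omega_{q_i}$ in the same cyclic order, one argues that each arc-swap performed above can only move a face of $\mathcal{F}_b$ that was outside $\mathcal{C}_a$ into an inner region of the current curve (which, by the corollary of the Jordan curve theorem recalled in Section~\ref{sectPreliminaries}, still belongs to its outside), and can never pull a face of $\mathcal{T}_a$ out of the inside nor a face of $\mathcal{T}_b$ into the genuine inside. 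Hence the inside of $\hat{\mathcal{C}}_a$ still contains $\mathcal{F}_a$ and no face of $\mathcal{F}_b$. Dually — and this is the formulation I would actually write out — one checks that no region of $\mathbb{R}^2\setminus\mathcal{C}'$ meets two clusters: removing $\mathcal{C}_i$ merges the region of $V_i$ with those of its neighbouring components, the only passages thereby opened run through the joint-vertices $\omega_1,\dots,\omega_{q_i}$, and adding $\mathcal{C}'_i$, which passes through all of them in the same cyclic order, re-seals each of them. The step I expect to be the main obstacle is precisely this last one: controlling how the arc-swaps (equivalently, the reinsertion of $\mathcal{C}'_i$ at the joint-vertices) affect the inside/outside partition, since every curve involved may be non-simple; the way around it is to conduct the whole discussion with the inside / inner-region / outside trichotomy rather than a naive Jordan dichotomy, and to use the cyclic-order condition at the $\omega_l$'s to forbid a swap from moving a terminal face across the genuine inside/outside divide. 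One must also dispose of the degenerate cases excluded by the running assumptions (some $C_i$ or $C_a$ consisting of more than one cycle), handled separately in the paper, but these do not affect the core argument.
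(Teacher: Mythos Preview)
Your approach is genuinely different from the paper's. The paper argues by contradiction at the level of \emph{paths}: it fixes a hypothetical path $\mu_{a,b}$ in $G''$ between terminals of two clusters $\mathcal{T}_j,\mathcal{T}_{j'}$ (necessarily $j,j'\neq i$), observes that $\mu_{a,b}$ must use some edge of $C_i$ (since $C$ was a cut and only $C_i$ was removed), localizes that edge to an arc $\mathcal{C}_i[\omega_1,\omega_2]$ between consecutive joint-vertices, and then proves via a crossing-parity argument (stated as a separate Claim) that $\mu_{a,b}$ must also cross the homotopic arc $\mathcal{C}'_i[\omega_1,\omega_2]\subseteq\mathcal{C}'$, a contradiction. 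Your approach is curve-based and global: you try to manufacture, for each pair $(a,b)$ with $a,b\neq i$, a separating closed curve $\hat{\mathcal{C}}_a\subseteq\mathcal{C}'$ by splicing arcs of $\mathcal{C}'_i$ into $\mathcal{C}_a$.

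The gap is exactly where you flag it. You assert that each arc-swap ``can only move a face of $\mathcal{F}_b$ \dots\ into an inner region'' and ``can never pull a face of $\mathcal{T}_a$ out of the inside'', but you do not prove this; invoking homotopy of $\mathcal{C}_i$ and $\mathcal{C}'_i$ with respect to $\mathcal{F}$ plus a cyclic-order condition is not by itself enough to control what a local arc replacement does to the inside/outside partition of a \emph{third} curve $\mathcal{C}_a$. (Also, the lemma does not hypothesize that $\mathcal{C}'_i$ visits the $\omega_l$ in the same cyclic order as $\mathcal{C}_i$; the paper extracts the arc $\mathcal{C}'_i[\omega_1,\omega_2]$ from the homotopy itself, not from an order assumption.) The paper's parity argument is precisely the missing ingredient: it shows, without any global bookkeeping on $\hat{\mathcal{C}}_a$, that a path entering the lens $\mathcal{C}_i[\omega_1,\omega_2]\cup\mathcal{C}'_i[\omega_1,\omega_2]$ through $\mathcal{C}_i[\omega_1,\omega_2]$ must exit through $\mathcal{C}'_i[\omega_1,\omega_2]$, exploiting that between consecutive joint-vertices all crossed edges of $C_i$ go between $V_i$ and a single other component $V_l$. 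Your plan could in principle be completed, but the completion essentially requires this same local lens-and-parity analysis, at which point the detour through $\hat{\mathcal{C}}_a$ buys nothing.
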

\begin{proof}
Assume that one such $\mathcal{C}'$ is not a multi-cluster cut. Consider any path $\mu_{a,b}$ in $G'=(V, E \setminus C')$ between two terminal vertices $t_a \in \mathcal{T}_j$ and $t_b \in \mathcal{T}_{j'}$ for some $j \neq j'$. We cannot have $j=i$ or $j'=i$, by the definition of $\mathcal{C}'_i$. Moreover, since $C$ is a multi-cluster cut, we know that $\mu_{a,b}$ contains at least one edge in $C_i$, say $uv$. Choose an edge dual to such an edge in $C_i$, and assume without loss of generality that this dual edge belongs to the curve $\mathcal{C}_i[\omega_1,\omega_2]$, defined as the part of $\mathcal{C}_i$ linking $\omega_1$ and $\omega_2$. From Lemma \ref{lemHomotopy2}, $\mathcal{C}'_i$ is homotopic to $\mathcal{C}_i$ with respect to $\mathcal{F}$. Hence, $\mathcal{C}_i$ can be continuously deformed into $\mathcal{C}'_i$ in $\mathbb{R}^2 \setminus \mathcal{F}$. In particular, since $\mathcal{C}'_i$ goes through $\omega_1$ and $\omega_2$, it contains some curve $\mathcal{C}'_i[\omega_1,\omega_2]$ homotopic to $\mathcal{C}_i[\omega_1,\omega_2]$ with respect to $\mathcal{F}$. Hence, the inside of the closed curve $\mathcal{C}_i[\omega_1,\omega_2] \cup \mathcal{C}'_i[\omega_1,\omega_2]$ contains neither $t_a$ nor $t_b$ (since $i,j,j'$ are all distinct). We claim the following :

\begin{Claim}
$\mu_{a,b}$ must ``intersect'' (i.e. have an edge in common with) $\mathcal{C}'_i[\omega_1,\omega_2]$ at least once.
\end{Claim}
\begin{proof}
Since $\mathcal{C}_i[\omega_1,\omega_2] \cup \mathcal{C}'_i[\omega_1,\omega_2]$ is a closed (but not necessarily simple) curve, the edge dual to any edge on its boundary either belongs to both $\mathcal{C}_i[\omega_1,\omega_2]$ and $\mathcal{C}'_i[\omega_1,\omega_2]$ (which is clearly not the case for $uv$, otherwise we are done), or has one endpoint inside $\mathcal{C}_i[\omega_1,\omega_2] \cup \mathcal{C}'_i[\omega_1,\omega_2]$ and one endpoint outside $\mathcal{C}_i[\omega_1,\omega_2] \cup \mathcal{C}'_i[\omega_1,\omega_2]$ (so, this is the case for $uv$).

Now, assume that $\mu_{a,b}$ has $t \geq 1$ (for some $t$) edges in common with $\mathcal{C}_i[\omega_1,\omega_2]$ (none of them is of the first type described above, otherwise we are done). If $\mu_{a,b}$ crosses $\mathcal{C}'_i[\omega_1,\omega_2]$, then we are done. Assume otherwise. $\omega_1$ and $\omega_2$ being two consecutive joint-vertices in $\mathcal{C}_i$, then by definition each of these $t$ edges has one endpoint in $V_i$ and the other one in $V_l$ for some $l$ (the same $l$ for all these edges). In particular, the vertices inside $\mathcal{C}_i[\omega_1,\omega_2] \cup \mathcal{C}'_i[\omega_1,\omega_2]$ that are incident to edges in $\mathcal{C}_i[\omega_1,\omega_2]$ all belong to the same connected component of $(V,E\setminus C)$ (either $V_i$ or $V_l$). Hence, each time $\mu_{a,b}$ ``crosses'' $\mathcal{C}_i[\omega_1,\omega_2]$, it ``changes side'' (going for instance from $V_i$ to $V_l$, then from $V_l$ to $V_i$, then again from $V_i$ to $V_l$, etc.). If it crosses $\mathcal{C}_i[\omega_1,\omega_2]$ an even number of times (the first edge crossed being $uv$ and the last one $u'v'$ for instance), then $u$ and $v'$ either both belong to $V_i$ or both belong to $V_l$ (i.e., belong to the same connected component of $(V,E\setminus C)$). So, instead, we can find a new path $\mu'_{a,b}$ from $t_a$ to $t_b$ that does not cross $\mathcal{C}_i[\omega_1,\omega_2]$ at all, by replacing the part of $\mu_{a,b}$ going from $u$ to $v'$ by a path from $u$ to $v'$ using vertices of $V_i$ (or $V_l$) only; this yields a contradiction. By the same argument, we can show that if $\mu_{a,b}$ crosses $\mathcal{C}_i[\omega_1,\omega_2]$ an odd number of times (the first edge crossed being $uv$ and the last one $u'v'$ for instance; note that $u'v'$ may be $uv$), then $v'$ is inside $\mathcal{C}_i[\omega_1,\omega_2] \cup \mathcal{C}'_i[\omega_1,\omega_2]$. Since the part of $\mu_{a,b}$ going from $v'$ to $t_b$ crosses neither $\mathcal{C}_i[\omega_1,\omega_2]$ (by definition) nor $\mathcal{C}'_i[\omega_1,\omega_2]$ (by assumption), and since  neither $t_a$ nor $t_b$ are inside $\mathcal{C}_i[\omega_1,\omega_2] \cup \mathcal{C}'_i[\omega_1,\omega_2]$, this yields a contradiction. Thus, $\mu_{a,b}$ must cross $\mathcal{C}'_i[\omega_1,\omega_2]$.\qed
\end{proof}

From this claim, $\mathcal{C}'$ intersects any path between two terminals lying in different clusters: it contradicts the fact that $\mathcal{C}'$ is not a multi-cluster cut.\qed
\end{proof}

We can then use this lemma to show that, if $q_i \geq 2$:

\begin{corollary}\label{corStruct}
Let $C$ be a minimum multi-cluster cut in a graph $G=(V,E)$, let $V_i$ be a connected component of $G'=(V, E \setminus C)$, let $\mathcal{C}_i$ be the associated curve in $G^*$, and let $\omega_1, \dots, \omega_{q_i}$ be the joint-vertices $\mathcal{C}_i$ goes through. Then, $\mathcal{C}_i$ is a shortest cycle in $G^*$, that is homotopic to any cycle $\Gamma$ in $G^*$ going through $\omega_1, \dots, \omega_{q_i}$ and being such that the faces in $\mathcal{F}_i$ are inside $\Gamma$, while the faces in $\mathcal{F}_j$ are outside $\Gamma$ for each $j \neq i$.
\end{corollary}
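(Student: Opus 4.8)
The plan is to derive the homotopy statement directly from the preliminary lemmas, and the ``shortest cycle'' statement by an exchange argument combining Lemma \ref{lemStruct} with the minimality of $C$.

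For the homotopy part, note first that, by Lemma \ref{lemHomotopy1}, the closed curve $\mathcal{C}_i$ has exactly the faces of $\mathcal{F}_i$ in its inside and exactly the faces of $\mathcal{F}_j$ (for $j \neq i$) in its outside. By hypothesis, every competing cycle $\Gamma$ has precisely the same faces inside and the same faces outside. Hence, by Lemma \ref{lemHomotopy2}, $\mathcal{C}_i$ and each such $\Gamma$ are homotopic with respect to $\mathcal{F}$, which settles this part.

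For the minimality part I would argue by contradiction. Suppose there is a cycle $\Gamma$ in $G^*$ passing through $\omega_1, \dots, \omega_{q_i}$, with the faces of $\mathcal{F}_i$ inside and the faces of $\mathcal{F}_j$ (for $j \neq i$) outside, whose total weight --- that is, the weight $w(C'_i)$ of the primal edge set $C'_i$ dual to $\Gamma$ --- is strictly smaller than $w(C_i)$. I would first check that Lemma \ref{lemStruct} applies with $\mathcal{C}'_i := \Gamma$: indeed $\Gamma$ goes through $\omega_1, \dots, \omega_{q_i}$, and since it isolates $\mathcal{T}_i$ from all other clusters while $\mathcal{C} \setminus \mathcal{C}_i$ already separates the clusters $\mathcal{T}_j$ ($j \neq i$) from one another, the curve $\mathcal{C}' := (\mathcal{C} \setminus \mathcal{C}_i) \cup \Gamma$ has the faces of $\mathcal{T}_i$ inside and those of $\mathcal{T}_j$ outside for every $j \neq i$. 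Thus, by Lemma \ref{lemStruct}, $\mathcal{C}'$ is a valid multi-cluster cut for $I$.

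It then suffices to compare weights. Writing $C'$ for the primal edge set dual to $\mathcal{C}'$, we have that $C'$ is contained in $(C \setminus C_i) \cup C'_i$, so, since $w$ is nonnegative,
\[
w(C') \leq w(C \setminus C_i) + w(C'_i) < w(C \setminus C_i) + w(C_i) = w(C),
\]
the final equality using $C_i \subseteq C$. This contradicts the minimality of $C$, so no such strictly shorter $\Gamma$ exists, i.e. $\mathcal{C}_i$ is a shortest cycle through $\omega_1, \dots, \omega_{q_i}$ subject to the stated face constraints. The only point requiring some care is this last step: $\mathcal{C} \setminus \mathcal{C}_i$ and $\Gamma$ may share dual edges (and some dual edges of $\Gamma$ may already lie in $C \setminus C_i$), which is exactly why the first inequality above need not be an equality; the strict inequality nevertheless survives because the gap $w(C'_i) < w(C_i)$ is strict. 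Everything else is a routine verification that the hypotheses of Lemma \ref{lemStruct} are satisfied.
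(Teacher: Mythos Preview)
Your proof is correct and follows essentially the same route as the paper's own argument: homotopy via Lemmas \ref{lemHomotopy1} and \ref{lemHomotopy2}, then an exchange argument using Lemma \ref{lemStruct} and the optimality of $C$. If anything, you are slightly more careful than the paper in handling the weight comparison (accounting for possible overlaps between $C \setminus C_i$ and $C'_i$), which the paper glosses over in a single clause.
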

\begin{proof}
Assume that $\mathcal{C}_i$ is not such a shortest cycle. Then, we can replace $\mathcal{C}_i$ by a shortest cycle $\Gamma^*$ in $G^*$ going through $\omega_1, \dots, \omega_{q_i}$, and such that the faces in $\mathcal{F}_i$ are inside $\Gamma^*$, while the faces in $\mathcal{F}_j$ are outside $\Gamma^*$ for each $j \neq i$. From Lemma \ref{lemStruct}, $\mathcal{C}'=(\mathcal{C}\setminus \mathcal{C}_i) \cup \Gamma^*$ is also a valid multi-cluster cut for $I$. Moreover, $\Gamma^*$ is strictly shorter than $\mathcal{C}_i$ (since from Lemmas \ref{lemHomotopy1} and \ref{lemHomotopy2} they are homotopic with respect to $\mathcal{F}$), and hence $\mathcal{C}'$ is a strictly better solution than $\mathcal{C}$: a contradiction.\qed
\end{proof}

\subsection{Algorithmic aspects}

From Subsection \ref{sectAlgo:Struct}, we can construct $\mathcal{C}$ in an iterative way, by first ``guessing'' (i.e., enumerating) all the joint-vertices, then computing each $\mathcal{C}_i$ corresponding to a single closed curve one after the other, and finally removing the vertices inside it, and go on. (We assume without loss of generality that we look for an optimal solution having the maximum number of joint-vertices among the ones with $p$ clusters, and this implies that we cannot create ``new'' joint-vertices when computing each $\mathcal{C}_i$.) Hence, we have to guess an $i$ for which $\mathcal{C}_i$ corresponds to a single closed curve, compute $\mathcal{C}_i$ and remove it, and then go on by identifying another $i$ for which the part of $\mathcal{C}_i$ lying in the remaining graph (i.e., after removing the previous component) corresponds to a single closed curve, until there remains only one component. We can do this by enumerating all the possible sets of inclusions between the $C_i$'s (i.e., for each $i$ and $j\neq i$, whether there is one cycle $\bar{C}_i$ contained in $C_i$, that corresponds to a closed curve $\bar{\mathcal{C}}_i$ in the dual graph, and such that $C_j$ lies inside $\bar{\mathcal{C}}_i$; or whether there is one cycle $\bar{C}_j$ contained in $C_j$, that corresponds to a closed curve $\bar{\mathcal{C}}_j$ in the dual graph, and such that $C_i$ lies inside $\bar{\mathcal{C}}_j$; or finally whether none lies inside the other). Since the number of $C_i$'s is $p$ and since $p$ is fixed, this can be done in constant time.

In order to compute $\mathcal{C}_i$ for each $i$, we must first ``guess'' which joint-vertices $\mathcal{C}_i$ goes through (from Lemma \ref{lemNbDualJoints}, the maximum number of joint-vertices is $2p-4$, so guessing them requires to try all the possible ways of choosing at most $2p-4$ vertices among $2 \vert V \vert - 4$, which implies that the running time will depend on $p$), and then we can apply Corollary \ref{corStruct} and find a shortest cycle homotopic to some predefined curve in $G^*$ (keeping in mind that $\mathcal{C}_i$ may go through no joint-vertex; in this case, we only need to compute a minimum cut separating $\mathcal{T}_i$ from $\mathcal{T}_j$, for all $j \neq i$). (If needed, we can reduce the computation of a shortest cycle to the computation of a shortest path, by ``guessing'' the first edge of this path.) Finding a shortest homotopic path or cycle can be hard, if we require that it must be elementary; however, this property is not needed in our case. (And, indeed, some $\mathcal{C}_i$'s may be non simple cycles, such as $\mathcal{C}_1$ in Figure \ref{figPlanarMinMCC}.) We can compute a shortest homotopic path or cycle using for instance the algorithms given in \cite[Proposition 1]{Schrijver91} or in \cite{refECDV06}.

Finally, we have two last points to address. First, we must ensure that the shortest cycles or paths we compute go through predetermined joint-vertices. Second, we need to be able to generate all the possible predefined curves that the shortest paths we compute can be homotopic to. We now describe the strategy we use to deal with both points at the same time. Each time a given $\mathcal{C}_i$ goes through a given joint-vertex, this means that some vertices of the primal face associated with this joint-vertex belong to $V_i$. Actually, we even know that, on each face associated with a joint-vertex $\mathcal{C}_i$ goes through, there are at most $h_i+1$ sets of consecutive vertices (called \emph{intervals}) that belong to $V_i$, where $h_i \leq p$ is the number of inner regions of $\mathcal{C}_i$. Therefore, to the joint-vertices associated with a given $\mathcal{C}_i$ corresponds a set $B_i$ of distinct vertices of $V_i$ lying on the primal faces associated with these joint-vertices. The best way to encode this set $B_i$ is to include two vertices of each interval. For a given interval lying on the face associated with a given joint-vertex, call $a$ and $b$ the two vertices of this interval. Then, the vertices in $B_i$ associated with that interval are all the vertices of this face encountered while traveling clockwise from $a$ to $b$ on this face. Let us denote by $\mathcal{B}_i$ the set of dual faces associated with the vertices in $B_i$. By definition, each face associated with a joint-vertex contains at least two vertices belonging to two different $B_i$'s, thus from Lemma \ref{lemHomotopy2}, for each $i$, $\mathcal{C}_i$ is homotopic, with respect to the faces in $\mathcal{F}$ and $\bigcup_j \mathcal{B}_j$, to any closed curve being such that the faces in $\mathcal{F}_i \cup \mathcal{B}_i$ are inside it, and the faces in $\mathcal{F}_j \cup \mathcal{B}_j$, for each $j \neq i$, are outside it. More generally, any closed curve goes through the same joint-vertices as $\mathcal{C}_i$, if this curve is such that the faces in $\mathcal{B}_i$ belong to its inside, and the faces in $\mathcal{B}_j$ belong to its outside, for each $j \neq i$.

Since for each $i$ the inside of $\mathcal{C}_i$ is a connected region, i.e., the subgraph of $G$ induced by $V_i$ is connected, we also know that in $G'=(V, E \setminus C)$ all the vertices in $B_i$, as well as all the terminals in $\mathcal{T}_i$, are connected together. This implies that, for each $i$, we can construct a closed curve $\mathcal{C}'_i$ homotopic to $\mathcal{C}_i$ by choosing some tree spanning both $\mathcal{T}_i$ and $B_i$, and then removing the edges having exactly one endpoint in the $i$th of these spanning trees. (For each $i$, $\mathcal{C}'_i$ goes through the same joint-vertices as $\mathcal{C}_i$, and $\mathcal{C}'_i$ and $\mathcal{C}_i$ are indeed homotopic with respect to the faces in $\mathcal{F}$, since $\mathcal{T}_i$ is the only cluster that belongs to the inside of $\mathcal{C}'_i$, i.e., $C'_i$ isolates $\mathcal{T}_i$ from $\mathcal{T}_j$, for all $j \neq i$.) In practice, we have to ``guess'' $B_i$ for each $i$ (which, as mentioned above, can be done by enumerating at most two vertices of $G$ for each interval), making sure that the $B_i$'s define a partition of the vertex set of the faces associated with all the joint-vertices, and then construct $p$ vertex-disjoint trees (each one spanning $\mathcal{T}_i$ and $B_i$ for some $i$), and finally remove the edges isolating each tree from the rest of the graph. For each combination of $B_i$'s, finding such vertex-disjoint trees can be done in polynomial time (since the graph is planar, $\sum_{i=1}^p \vert \mathcal{T}_i \vert$ is fixed, and the number of mandatory vertices that the $p$ trees must span lie on at most $\sum_{i=1}^p \vert \mathcal{T}_i \vert+(2p-4)$ faces), using for instance the algorithm given in \cite[Theorem 4]{Schrijver91}.\\

So, our algorithm for planar \textsc{MinMCC} is as follows:

\begin{enumerate}
\item For each possible clustering of the terminals, for each possible set of inclusions between the $C_i$'s, for each possible combination of joint-vertices, and for each possible choice of the $B_i$'s do:
\begin{enumerate}
\item Compute $p$ vertex-disjoint trees, each spanning $\mathcal{T}_i$ and $B_i$ for some $i$, and construct the curves $\mathcal{C}'_i$ by removing, for each $i$, each edge incident to exactly one vertex of the $i$th tree;
\item For each $i$ except the last one (in the order given by the current set of inclusions, starting from a $\mathcal{C}_i$ including no other $\mathcal{C}_j$ for $j \neq i$), compute a shortest cycle homotopic to $\mathcal{C}'_i$ with respect to $\mathcal{F}$ and $\bigcup_j \mathcal{B}_j$; then, remove the vertices of the connected component of $G$ lying inside this cycle.
\end{enumerate}
\item Output the best feasible solution found.
\end{enumerate}

We already explained why all steps run in polynomial time, and it should be clear from our above discussion that this algorithm is correct. This yields:

\begin{theorem}
\textsc{MinMCC} can be solved in polynomial time in planar graphs, if the sum of the sizes of the clusters is fixed.
\end{theorem}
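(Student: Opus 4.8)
The plan is to verify that the algorithm above always returns an optimal multi-cluster cut and that it runs in polynomial time; the theorem is then immediate. I sketch both parts below. For correctness, I would fix an arbitrary optimal multi-cluster cut $C$ for $I$, chosen (among all optimal solutions) so that the induced clustering has the maximum number of connected components, and, subject to that, so that $C^*$ has the maximum number of joint-vertices. The first step is to check that the combinatorial data of $C$ is among the data enumerated by the outer loop. The clustering is covered because, by the first lemma of Section~\ref{sectStart}, there are only constantly many clusterings when $\sum_i|\mathcal{T}_i|$ (and hence $p$) is fixed. The inclusion pattern among the $C_i$'s is one of a constant number of possibilities, again since $p$ is fixed. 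By Lemma~\ref{lemNbDualJoints}, $C^*$ has at most $2p-4$ joint-vertices, so the correct set of joint-vertices is one of the $O(|V|^{2p-4})$ subsets tried (recall that $G^*$ has at most $2|V|-4$ vertices). Finally, on each of the at most $2p-4$ faces associated with joint-vertices, the vertices of $V_i$ form at most $h_i+1\le p+1$ intervals, each encoded by two primal vertices, so the correct sets $B_i$ are among those enumerated.

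Next I would examine the iteration of the outer loop corresponding to this data. In step~(a), the $p$ vertex-disjoint trees (the $i$-th spanning $\mathcal{T}_i\cup B_i$) yield, for each $i$, a closed curve $\mathcal{C}'_i$; using the connectivity of $V_i$ together with Lemma~\ref{lemHomotopy2} and the chosen encoding of $B_i$, each $\mathcal{C}'_i$ is homotopic to $\mathcal{C}_i$ with respect to the faces in $\mathcal{F}\cup\bigcup_j\mathcal{B}_j$ and passes through exactly the prescribed joint-vertices. In step~(b) the curves are processed in the order given by the inclusion pattern, innermost first; each $\mathcal{C}'_i$ is replaced by a shortest cycle of its homotopy class in the current dual graph, and the component inside it is deleted. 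By Corollary~\ref{corStruct} this shortest cycle is no longer than $\mathcal{C}_i$, and by Lemma~\ref{lemHomotopy1} it still isolates $\mathcal{T}_i$ from every other cluster; hence after the $p-1$ replacements the curve obtained is a valid multi-cluster cut of weight at most $w(C)$. Since the algorithm outputs the best feasible solution found, it returns an optimal one.

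The point that requires the most care --- and the main obstacle in turning the informal ``guess, compute, remove, and go on'' scheme into a rigorous argument --- is the consistency of the successive peeling steps. One must check that, after deleting the component inside a computed cycle, the remaining portions of the other $C_i$'s again form single closed curves that can be taken in the prescribed order; this is where the observation preceding Lemma~\ref{lemHomotopy2} is used, namely that if $C^*_i$ is a union of several cycles then it has an outermost cycle $\bar{C}^*_i$ with every other of its cycles strictly inside, each such cycle containing some $C^*_j$ with $j\ne i$. One must also check that restricting to the remaining graph preserves the relevant homotopy classes, and that substituting a shortest homotopic cycle cannot create a ``new'' joint-vertex --- which is exactly why $C$ was chosen with a maximum number of joint-vertices (and components). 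Throughout, the curves $\mathcal{C}_i$ need not be simple, as $\mathcal{C}_1$ in Figure~\ref{figPlanarMinMCC} shows; Lemmas~\ref{lemHomotopy1}--\ref{lemHomotopy2} and Corollary~\ref{corStruct} were phrased to cover this case, but the side-changing (parity) argument inside the proof of Lemma~\ref{lemStruct} must be invoked with it in mind.

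For the running time I would simply tally the costs. The four nested enumerations contribute, respectively, a constant factor (clusterings), a constant factor (inclusion patterns), an $O(|V|^{2p-4})$ factor (joint-vertices), and a polynomial factor (the sets $B_i$, by the interval bound above). Inside one iteration, step~(a) computes $p$ vertex-disjoint trees whose mandatory vertices lie on at most $\sum_{i=1}^p|\mathcal{T}_i|+(2p-4)$ faces --- a number fixed with the input parameters --- which is polynomial by the planar disjoint-trees algorithm of~\cite{Schrijver91}, and step~(b) computes $p-1$ shortest homotopic cycles, each in polynomial time by~\cite{Schrijver91} or~\cite{refECDV06} (a shortest cycle being reducible to a shortest path by guessing one edge). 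The product of a polynomial number of iterations and a polynomial cost per iteration is polynomial, which completes the proof.
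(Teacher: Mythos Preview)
Your proposal is correct and follows essentially the same approach as the paper: enumerate the combinatorial data (clustering, inclusion pattern, joint-vertices, interval encodings $B_i$), build homotopy templates via disjoint trees, replace each by a shortest homotopic cycle using Corollary~\ref{corStruct}, and tally the polynomial cost of each step. In fact your sketch is more explicit than the paper's own justification, which simply asserts that correctness ``should be clear from our above discussion''; your identification of the delicate points (consistency of the peeling order, preservation of homotopy classes after deletion, and the maximality of joint-vertices to forbid new ones) matches exactly the caveats the paper scatters through Section~\ref{sectAlgo}.
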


Therefore, we can finally state:

\begin{corollary}
\textsc{MinMC} can be solved in polynomial time in planar graphs, if the number of source-sink pairs is fixed.
\end{corollary}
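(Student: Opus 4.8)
The plan is simply to chain together Lemma~1 and the Theorem above. Let $I=(G,w,\mathcal{L})$ be a \textsc{MinMC} instance with $G$ planar and $k=\vert\mathcal{L}\vert$ fixed, and let $S$ be the set of (at most $2k$) terminal vertices. First I would enumerate every partition of $S$ into clusters; since $\vert S\vert\le 2k$ is a constant, the number of such partitions is bounded by the Bell number $B_{2k}$, hence constant. Each partition $\{\mathcal{T}_1,\dots,\mathcal{T}_p\}$ (with $p\le 2k$) yields a \textsc{MinMCC} instance on the \emph{same} graph $G$ with the \emph{same} weights, so planarity is preserved for free; this is exactly the reduction of Lemma~1.

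Next, for each such \textsc{MinMCC} instance I would invoke the Theorem: the sum of the sizes of the clusters equals $\vert S\vert\le 2k$, which is fixed, so \textsc{MinMCC} is solvable in polynomial time on it, and I return the smallest value obtained over all partitions (together with the corresponding edge set). Since there is a constant number of partitions and each is handled in polynomial time, the total running time is polynomial in the size of $G$.

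For correctness I would argue both inequalities. On one hand, any solution to any of the \textsc{MinMCC} instances separates terminals belonging to different clusters, hence separates $s_i$ from $s_i'$ whenever $s_i$ and $s_i'$ were placed in different clusters; a partition in which some pair $(s_i,s_i')$ ends up in the same cluster is infeasible and discarded, so every value we consider is the weight of a genuine multicut of $I$, giving an upper bound of this kind that is $\ge$ the optimum of $I$. On the other hand, let $C$ be an optimal multicut of $I$ and let $\{\mathcal{T}_1,\dots,\mathcal{T}_p\}$ be the clustering it induces via the connected components of $(V,E\setminus C)$; then $C$ is a feasible multi-cluster cut for that partition, so the \textsc{MinMCC} optimum for it is $\le w(C)$. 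Moreover this clustering has the property that each cluster induces exactly one component of $(V,E\setminus C)$, which is precisely the hypothesis under which the algorithm of Subsection~\ref{sectAlgo:Struct} is correct, so on that partition we indeed recover a value $\le w(C)$. Combining the two inequalities shows that the returned value equals the \textsc{MinMC} optimum.

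I do not expect a genuine obstacle here: the whole difficulty has already been absorbed into the Theorem (and the structural results of Subsection~\ref{sectAlgo:Struct} that it relies on). The only points requiring a little care are bounding the number of clusterings by a constant, checking that the reduction keeps the graph planar (immediate, since $G$ is left untouched), and making sure the ``one component per cluster'' assumption used by the algorithm is met by at least the clustering induced by an optimal multicut --- which it is by construction.
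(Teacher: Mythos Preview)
Your proposal is correct and follows exactly the approach the paper intends: the corollary is stated without an explicit proof precisely because it is the immediate combination of Lemma~1 (the planarity-preserving reduction from \textsc{MinMC} to \textsc{MinMCC} via enumerating the constantly many clusterings of the at most $2k$ terminals) with the Theorem on planar \textsc{MinMCC}. Your extra care about feasibility of each enumerated partition and about the ``one component per cluster'' hypothesis being met by the clustering induced by an optimal multicut matches the paper's own remark after Lemma~1.
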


\section{Acknowledgements}

The author would like to thank Sylvie Poirier for her help, and \'Eric Colin de Verdière for fruitful discussions on shortest homotopic paths.

\end{document}